\begin{document}

\title{An Alternating Algorithm for Finding Linear Arrow-Debreu Market Equilibria\thanks{This paper appeared in preliminary form in ICCOPT 2019 (the 6th International Conference on Continuous Optimization).}}

\author{Po-An Chen\inst{1}\thanks{Corresponding author, supported in part by MST 105-2221-E-009-104-MY3} \and Chi-Jen Lu\inst{2}\and Yu-Sin Lu\inst{3}}
\authorrunning{P.-C. Chen, C.-J. Lu, and Y.-S. Lu}
% First names are abbreviated in the running head.
% If there are more than two authors, 'et al.' is used.
%
\institute{Institute of Information Management, National Chiao Tung University\\1001 University Rd, Hsinchu City 300, Taiwan \\\email{poanchen@nctu.edu.tw}\and
Institute of Information Science, Academia Sinica\\128 Academia Road, Section 2, Nankang, Taipei 11529, Taiwan
\\\email{cjlu@iis.sinica.edu.tw}\and
Industrial Technology Research Institute\\195, Section 4, Chung Hsing Rd, Chutung, Hsinchu County, Taiwan \\\email{s8700937@gmail.com}}

\maketitle
\begin{abstract}  % put your abstract here!
Motivated by the convergence result of mirror-descent algorithms to market equilibria in linear Fisher markets, it is natural for one to consider designing dynamics (specifically, iterative algorithms) for agents to arrive at linear Arrow-Debreu market equilibria.
Jain \cite{jain} reduced equilibrium computation in linear Arrow-Debreu markets to the equilibrium computation in bijective markets,
where everyone is a seller of only one good and a buyer for a bundle of goods.
In this paper, we design an algorithm for computing linear bijective market equilibrium, based on solving the rational convex program formulated by Devanur et al. The algorithm repeatedly alternates between a step of gradient-descent-like updates and a distributed step of optimization exploiting the property of such convex program.
Convergence can be ensured by a new analysis different from the analysis for linear Fisher market equilibria.
\end{abstract}

\section{Introduction}
A market can be thought of as an algorithm or mechanism that implements a social choice of redistribution of various goods between agents (as buyers and sellers for goods) via pricing.
In mathematical economics, \emph{exchange market} models were first proposed by Walras in 1874 and later by Arrow and Debreu in 1954 along with the concept of \emph{market equilibrium} \cite{arrow}.
These exchange market models are used to capture the essence of complicated real-world markets.

In the Arrow-Debreu market model, each agent has an initial endowment of divisible goods and a utility function for purchasing a bundle of goods that maximizes her utility when every agent uses the revenue from selling her initial endowment.
The Fisher market model \cite{birnbaum} can be seen as a special case of the Arrow-Debreu model. In it, each buyer is subject to her budget constraint instead of the revenue from selling her initial endowment.
There is a market equilibrium\footnote{Note that the concept of market equilibria is \emph{not} the same as the concept of Nash equilibria. A market equilibrium does not allow buyers to strategically report their interests in different goods in order to maximize their own utilities.} if two conditions are satisfied:
\begin{itemize}
\item \emph{Buyer Optimality}, in which every agent uses the revenue from selling her initial endowment to purchase a bundle of goods to maximize her own utility
\item \emph{Market Clearance}, in which the total demand for every good equals the total supply
\end{itemize}
The celebrated theorem by Arrow and Debreu \cite{arrow} proved the existence of a market equilibrium under some mild necessary conditions for the utility functions.
The most common types of utility functions from this class, which are described in more detail in the section of preliminaries, are the \emph{linear}, \emph{Leontief}, and \emph{Cobb-Douglas} functions, and belong to the important class of \emph{Constant Elasticity of Substitution (CES)} functions \cite{solow,arrow:chenery}.

%\subsection{Our Results}
Formally, an Arrow-Debreu market $M$ consists of a set $A = \lbrace 1,...,n \rbrace$ of agents that trade a set $G=\lbrace 1,...,m \rbrace$ of divisible goods among themselves.
%(We we may want to work on a graph that is not necessarily complete.)
Each unit of good $g \in \lbrace 1,...,m \rbrace$ can be bought by an agent at price $p_g$. The vector of prices is $\mathbf{p} \in \mathbb{R}^m_+$ ($\mathbb{R}_+=\lbrace x \geq 0 \rbrace$).
Each agent $i$ purchases a consumption plan $\mathbf{x}_i \in \mathbb{R}^m_+$.
Each agent $i$ has an initial endowment $\mathbf{w}_i \in \mathbb{R}^m_+$ of the $m$ goods, where $w_{ij}$ is the amount of good $j$ initially held by $i$. These initial goods can be sold to other agents and thus provide agent $i$ with revenue. The revenue can in turn be used to purchase other goods.
Every agent $i$ has a utility function $U_i$ : $\mathbb{R}^m_+ \rightarrow \mathbb{R}_+$, where $U_i(\mathbf{x}_i)$  describes how much utility agent $i$ receives from the consumption plan $\mathbf{x}_i$.
An Arrow-Debreu market can then be described as $M=(A,G,u,w)$.When $G=A$, i.e., $w_{ij}=1$ if $i=j$ and $0$ otherwise, the Arrow-Debreu market becomes a bijective market.
The utility of agent $i$ for the good of agent $j$ is $u_{ij} \geq 0$.
The directed graph $(A,E)$ has an edge $ij$ for every pair with $u_{ij} > 0$ where set $E$ consists of all the edges.
Jain \cite{jain} reduced Arrow-Debreu markets to bijective markets as follows.
If a good is included in the initial endowment of multiple agents, we give a different name to each good.
If an agent has $k$ goods in the endowment, we split the agent into $k$ copies that have the same utility function and each owns one of the goods.
Consequently, we can focus on the equilibrium computation for bijective markets since any solutions would also work for Arrow-Debreu markets.

Given the convergence of distributed generalized gradient-descent algorithms (i.e., mirror-descent) to market equilibria in the Fisher and CCVF markets \cite{birnbaum,cheung},
it is natural for one to consider \emph{designing dynamics (specifically, iterative algorithms) for agents to arrive at market equilibria in Arrow-Debreu markets}.\footnote{We can even require the dynamics to be of the no-regret property to provide incentives for the agents if possible \cite{kleinberg:piliouras:tardos:load,CL14,CL15,CL16}.}
In this paper, instead of using generic algorithms for solving a convex program, i.e., convex program solvers,
we design (partially-distributed) iterative algorithms for solving the following convex program formulated by Devanur et al. in \cite{devanur:garg} for \emph{bijective markets},
where everyone is a seller of only one good and a buyer for a bundle of goods:\footnote{Jain \cite{jain} reduced the equilibrium computation in linear Arrow-Debreu markets to the equilibrium computation in bijective markets.}
%We give the formulation here for the convenience of discussion.
%We will elaborate on such formulation more in the section of preliminaries.
\begin{eqnarray} \label{eq:convex}
&\min \quad \sum_{j} p_j \log\frac{p_j}{\beta_j} - \sum_{i,j}b_{ij}\log u_{ij} \\
& \nonumber \quad\quad\quad \sum_i b_{ij}=p_j \ \forall j \\
& \nonumber \quad\quad\quad \sum_j b_{ij}=p_i \ \forall i \\
& \nonumber \quad\quad\quad\,\, u_{ij}/p_j \leq 1/\beta_i\ \forall i,j \\
& \nonumber \quad\quad\quad\quad p_i \geq 1 \ \forall i \\
& \nonumber \quad\quad\quad\quad b_{ij}\geq 0 \ \forall i,j, \beta_i \geq 0 \ \forall i,
\end{eqnarray}
where $b_{ij}$ is paid by agent $i$ to agent $j$ for good $j$, $p_i$ ``acts like" the endowed budget $B_i$ of agent $i$ in the Fisher model
(yet there is no actual budget but only the initial endowment of goods for agent $i$),
and $\beta_i$ is agent $i$'s inverse of ``best" bang per buck.
Our algorithm for computing linear Arrow-Debreu market equilibria is based on solving this rational convex program that captures buyer optimality and market clearance.
Generic solvers that mostly search through the feasible space for an optimal solution are simply designed for any optimization problem, do not take advantage of the structure of the mathematical program, and may sacrifice scalability, i.e., they may not be fast enough when facing large problems. Since we are dealing with a special convex program characterizing linear bijective market equilibria, update dynamics that exploit the structure of this program may enjoy better scalability and, from a computational perspective, enable potentially (at least partially) distributed implementation rather than entirely centralized computation of generic solvers.
Thus, technically generic solvers may not be a good candidate for comparison with our partially-distributed algorithm.
On the other hand, we argue the necessity of our design and compare our algorithm mainly with block coordinate gradient descent type methods \cite{beck:tetruashvili,luo:tseng} later in Section~1.1.

\subsubsection{Mirror Descents for Fisher Markets.}
Recall that when designing distributed algorithms via mirror descents for Fisher markets \cite{birnbaum},
each $\beta_i$ is endogenously set to $1$ in the convex program above (there is no concept of variable $\beta_i$ so there is no need for to update $\beta_i$).
The convex program is then as follows.
\begin{eqnarray*}
&\min \quad \sum_{j} p_j \log p_j - \sum_{i,j}b_{ij}\log u_{ij} \\
& \nonumber \quad\quad\quad \sum_i b_{ij}=p_j \ \forall j \\
& \nonumber \quad\quad\quad \sum_j b_{ij}=B_i \ \forall i \\
& \nonumber \quad\quad\quad\quad b_{ij}\geq 0 \ \forall i,j.
\end{eqnarray*}
The objective function becomes
\[\varphi(\mathbf{b})=\sum_j p_j\log p_j-\sum_{i,j}b_{ij}\log u_{ij}=\sum_{i,j}b_{ij}\log\frac{p_j}{u_{ij}},\] and the feasible space is
\[\mathcal{S}=\{\mathbf{b}\in\mathbb{R}^{n\times m}:\sum_j b_{ij}=B_i\forall i,b_{ij}\geq 0\forall i,j\}\]
where $p_j=\sum_i b_{ij}$.
The components of the gradient of $\varphi$ are thus
\[(\nabla\varphi(\mathbf{b}))(i,j)=1-\log\frac{u_{ij}}{p_j}.\]
The projection back to the feasible space is applied on vector $\mathbf{b}'' = (\mathbf{b}''_i)_i = (b''_{ij})_{ij}$ after the update such that $\mathbf{b}'' =\mathbf{b}-\eta\nabla\varphi(\mathbf{b})$ for a learning rate $\eta$ and $\mathbf{b}'$ is the feasible point after projection.
The choice of the regularizer and thereby the corresponding Bregman divergence decides the actual distance measure for projection.
The Proportional Response (PR) dynamics \cite{zhang} are the weight updates that result from the Bregman divergence being the KL-divergence by choosing the relative entropy as the regularizer and $\eta=1$:
\[b'_{ij}=\frac{1}{Z_i}b_{ij}(\frac{u_{ij}}{p_j})\]
where $Z_i$ is chosen such that $\sum_j b_{ij}=B_i$.
Due to the property of the PR dynamics, the update of $b_{ij}$ does not need to depend on the information of any other agent (given that any price $p_j$ is publicly known), so it can be performed in a \emph{distributed} way. When the Bregman divergence is defined to be half the square of $l_2$ norm, the algorithm becomes the gradient descent algorithm.

\subsection{Our Results: Alternating Algorithms for Bijective Markets.}
In the above formulation~(\ref{eq:convex}), other than $b_{ij}$, there is one more set of variables that we need to consider for Arrow-Debreu markets: $\beta_i$.
One can simply choose to perform standard gradient updates (with projection) with respect to the concatenation of $(b_{ij})_{ij}$ and $(\beta_i)_i$. However, this approach would preclude any possibility of distributed updates of $(b_{ij})_{ij}$ or/and $(\beta_i)_i$. Since distributed computation is a desirable property to have as in \cite{birnbaum} and \cite{CL14,CL15,CL16}, we would like to utilize the structure of this special convex program to at least let some variables be updated distributedly.
If we separate the updates of $b_{ij}$ and $\beta_i$, we can alternate between updating $b_{ij}$ and updating $\beta_i$.
Fixing one set of variables temporarily, we can update the other variable set to decrease the objective value. This can be done repeatedly.
One natural attempt is to update $b_{ij}$ using mirror-descent types of approaches and updating $\beta_i$ simply optimizing the objective value.
Careless updates of $\beta_i$ would defeat the purpose of optimizing the objective over $b_{ij}$ (in the manner described above).
%This is where the difficulty comes from.

Observe that $\beta_i\leq\frac{p_j}{u_{ij}}$ for all $i,j$ and that the objective is non-increasing with $\beta_i$ (since the derivative with respect to $\beta_i$ is $-\frac{p_i}{\beta_i}\leq 0$). Thus, we know that $\beta_i = \min_j p_j/u_{ij}$ is the best possible value if we fix the value of $b_{ij}$, and this should be the update rule for $\beta_i$, which can be performed distributedly since it does not involve any information from other agents and every price is just a common information.
Note that mirror descents and the convergence analysis in \cite{birnbaum,CL14} will not work directly since we will modify $\beta_i$ before each modification of $b_{ij}$.
Nevertheless, we show that \emph{a convergence analysis mainly for gradient descents with certain condition} will do the job.
Thus, our approach in this paper is to alternate between
\begin{enumerate}
\item a gradient-descent step of updating $b_{ij}$ (with the previously chosen $\beta_{i}$) with projection\footnote{
The projection would be applied to the whole vector $\mathbf{b}$ as in \cite{birnbaum}.
Alternatively, one may apply mirror descents with respect to $b_i$ separately and distributedly, and 
show that, jointly, the objective still converges to minimum (as has been done for congestion games
\cite{CL14,CL15,CL16}). This can be left as future work.}, and
\item an optimization step (instead of the gradient-descent step) of updating $\beta_i$ (with $p_j$ determined by the previously chosen $b_{ij}$ in Step (1)).
\end{enumerate}
Specifically, Step (2) is done by each agent $i$ computing $\beta_i=\min_j p_j/u_{ij}$ (the inverse of the best bang per buck given the previously fixed $p_j$ in Step (1)). This can be done distributedly for each agent $i$.

To prove the convergence, we have to argue the following.
\begin{itemize}
\item Each gradient step decreases the current objective value by some amount when using a new analysis for gradient descents with a convex objective other than the analyses in \cite{birnbaum} and \cite{CL14}.
\item Step (2) does not increase the objective value given the previously chosen $b_{ij}$ in Step (1).
\end{itemize}
Note that we need such a new analysis because the convex function for which we perform the gradient-descent step keeps changing. It is the convex objective in which each $\beta_i$ is fixed to the value chosen in previous iteration.

Our approach bears a resemblance to the well-known block coordinate gradient descent type methods \cite{beck:tetruashvili,luo:tseng}, including the Block Coordinate Gradient Descent/Projection (BCGD and BCGP) algorithm and the alternating minimization algorithm \cite{beck:tetruashvili}. Our $(b_{ij})_{ij}$ and $(\beta_i)_i$ are like blocks in the BCGP algorithm yet there are two main reasons that prevent us from directly using BCGP: first, the BCGP algorithm assumes the global domain of the vector of all variables is a Cartesian product of all blocks' domains (which are independent) and there is no constraint involving more than one block of variables, but our constraints $\frac{u_{ij}}{p_j}\leq\frac{1}{\beta_i}\forall i,j$ violates this; secondly, we cannot perform the update of $(\beta_i)_i$ distributedly if we also do gradient-descent updates for $(\beta_i)_i$.
Thus, we do need to design a mixture of the gradient-descent update (with projection) and optimization update in this paper in order to best make use of the characteristics of the convex program modeling linear bijective market equilibria although our current analysis can only provide a lower convergence rate than the BCGP and two-block alternating minimization algorithms in \cite{beck:tetruashvili}. Nevertheless, looking at the convergence results, our bound $\lambda$ regarding the Hessian plays a similar role as the Lipschitz constant bounds in the BCGP algorithm. We leave the job of giving a better convergence rate to immediate future work.

\subsection{Related Work}
\subsubsection{Connection to block coordinate gradient descent type methods.}
The BCGD/BCGP method \cite{beck:tetruashvili} has each iteration consisting of a gradient step without/with projection with respect to some block of variables taken in a cyclic (or random) order. A global sublinear rate of convergence is established and even accelerated when the problem is unconstrained. Our convergence analysis in this paper shares a high-level similarity with those in decrease of the objective value for every iteration (in terms of some sort of recurrence relation) and then showing the global convergence rate using the recurrence recurrence relation. Since blocks of variables are updated sequentially, their convergence analysis requires the independence of the blocks' domains, which is not the case in our convex program for linear bijective market equilibria. We have constraints that involve two blocks of variables so the domains of blocks are not independent. Also, our algorithm has a learning rate, which is a common concept in machine learning when design algorithms and deriving convergence results, while the BCDG/BCDP algorithms has no such concept.

\subsubsection{Dynamics and computation for market equilibria.}
For Fisher market equilibria, the Eisenberg-Gale convex program \cite{eisenberg} can capture the equilibrium allocation for buyers with utility functions from the same class in the CES family.
The problem of equilibrium computation was introduced to the theoretical computer science community by Devanur et al. \cite{devanur}.
The proportional response dynamic is equivalent to a generalized gradient-descent algorithm with Bregman divergences on a convex program that captures the equilibria of Fisher markets with linear utilities \cite{zhang,birnbaum}.
The tatonnement process is a simple and natural rule for updating prices in markets.
Cheung et al. \cite{cheung} established that tatonnement is like a generalized gradient descent that uses the Bregman divergence for the class of Convex Conservative Vector Field (CCVF) markets.

%A algorithm can compute approximate equilibria in Arrow-Debreu Market by Kakade et al. \cite{kakade2004graphical} in 2004.
Jain \cite{jain} reduced the equilibrium computation in linear Arrow-Debreu markets to the equilibrium computation in bijective markets.
Chen et al. \cite{chen:dai} showed that the problem of computing an Arrow-Debreu market equilibrium with additively separable utilities is PPAD-complete.
Devanur et al. \cite{devanur:garg} formulated a rational convex program for the linear Arrow-Debreu model to characterize a market equilibrium.

\subsubsection{Markets on networks.} In almost all the market models we discussed previously, every buyer can shop for goods from every seller.
This implies that the agents (buyers/sellers) reside in an underlying network that is a complete graph.
In reality, trading may only happen between two immediate neighbors in a graph that is not necessarily complete.
Kakade et al. \cite{kakade} considered Arrow-Debreu markets with such generality and designed algorithms that compute approximate equilibria.

\subsubsection{Market games.} If buyers are allowed to strategically report their interests in different goods in order to maximize their own utilities,
the market becomes a market game \cite{adsul,branzei} in which Nash equilibrium can be used as the equilibrium concept.
The Fisher market game was first studied by Adsul et al. \cite{adsul} for buyers with linear utility functions.
Adsul et al. showed the existence of pure Nash equilibria under mild assumptions and provided the conditions necessary for a pure Nash equilibrium to exist.
Br\^{a}nzei et al. \cite{branzei} showed that a Fisher market game for buyers with linear, Leontief, and Cobb-Douglas utility functions always has a pure Nash equilibrium.
Br\^{a}nzei et al. also bounded the price of anarchy for Fisher market games.

\section{Preliminaries} \label{sec:prelim}

We now introduce the general Constant Elasticity of Substitution (CES) utility function family as follows:
\begin{equation}
U_i(\mathbf{x}_i)=(\displaystyle\sum_{j=1}^m u_{ij}\cdot x_{ij}^\rho)^\frac{1}{\rho}
\end{equation}
where $- \infty < \rho \leq 1$, $\rho \neq 0$. The Leontief, Cobb-Douglas, and linear utility functions shown below are given when $\rho$ approaches $- \infty$, approaches $0$, and equals $1$, respectively.
\begin{equation}
Leontief: \, U_i(\mathbf{x}_i)=\min_{j \in m}\lbrace \frac{x_{ij}}{u_{ij}} \rbrace
\end{equation}
\begin{equation}
Cobb-Douglas: \, U_i(\mathbf{x}_i) = \prod_{j \in m} x_{ij}^{u_{ij}}
\end{equation}
\begin{equation}
Linear: \, U_i(\mathbf{x}_i)= \sum_{j \in m}u_{ij}x_{ij}
\end{equation}
The Leontief function captures the utility of items that are perfect complements.
The Linear function captures the utility of items that are perfect substitutes.
The Cobb-Douglas function represents a perfect balance between complements and substitutes.

In a market equilibrium, we have a set of prices $p : A\rightarrow\mathbb{R}_+$ and allocations
$x : E\rightarrow\mathbb{R}_+$ that satisfy the following conditions \cite{devanur:garg}.
First, agents use an optimal plan at price $\mathbf{p}$ if every agent is allocated a utility-maximizing bundle subject to their constraints.
This reduces to the following.
\begin{itemize}
\item For every $i\in A$, if $x_{ij} > 0$, then $u_{ij}/p_j$ is the maximal value over $j\in A$.
\item $p_i > 0$ for every $i\in A$.
\end{itemize}
The market clears if $\sum_{i\in A}x_{ij}=1$\footnote{W.l.o.g., we assume that $\sum_{i\in A}w_{ij}=1$ for every $j\in A$ since only $w_{ii}=1$ and $w_{ij}=0$ for $i\neq j$ in bijective markets.} for every $j\in A$, i.e., every good is fully sold,
and $p_i=\sum_{j\in A}x_{ij}p_j$ for every $i\in A$, i.e., the money spent by agent $i$ equals her income $p_i$.

\subsection{Convex Program for Linear Bijective Markets}
%In the Devanur's program, the optimum value is $0$, and the prices $p_i$ in an optimal solution give a market equilibrium with allocations $x_{ij} = \frac{b_{ij}}{p_j}$. In fact, the convex program is for linear bijective market model, the bijective market model can transform to Arrow-Debreu market model. The convex program is formulated the following.
Our algorithm for computing linear Arrow-Debreu market equilibria is based on solving the rational convex program formulated by Devanur et al. in \cite{devanur:garg}.
Devanur et al. presented a rational convex program for the linear bijective market that guarantees the existence of a market equilibrium.
%and answered the open questions of Vazirani \cite{vazirani}.
We need to formally introduce the convex program here before giving our alternating algorithm in the next section.

\begin{eqnarray} \label{eq:cp}
&\min \quad \sum_{j} p_j \log\frac{p_j}{\beta_j} - \sum_{ij}b_{ij}\log u_{ij} \\
& \nonumber \quad\quad\quad \sum_i b_{ij}=p_j \ \forall j \in A \\
& \nonumber \quad\quad\quad \sum_j b_{ij}=p_i \ \forall i \in A \\
& \nonumber \quad\quad\quad u_{ij}/p_j \leq 1/\beta_i\ \forall i,j \in E \\
& \nonumber \quad\quad\quad\quad p_i \geq 1 \ \forall i \in A \\
& \nonumber \quad\quad\quad\quad \mathbf{b},\mathbf{\beta} \geq 0
\end{eqnarray}

Note that $u_{ij}$ is the given utility of agent $i$ for the item of agent $j$.
Variable $b_{ij}$ is the money paid by agent $i$ to agent $j$.
Variable $\beta_i$ represents the inverse \emph{best} bang-per-bucks of agent $i$.
Here, variable $p_i$, which is treated as a function of variables $b_{ij}$ for all $j$, corresponds to the value of agent $i$'s initial endowment since $\sum_j b_{ij}=p_i$.
Let $\mathbf{b}=(b_{ij})_{ij}$, $\mathbf{\beta}=(\beta_i)_i$, and $\mathbf{y}=(\mathbf{b},\mathbf{\beta})$.
The money spent by agent $i$ equals $i$'s income $p_i$ (revenue from selling her initial endowment) at market equilibrium, i.e.,
$p_i=\sum_{j\in A}x_{ij}p_j$. The allocation, which is dependent on $b_{ij}$ and $p_j$, can be derived by $x_{ij}=b_{ij}/p_j$.
Therefore, we can focus on $\mathbf{y}$.
The objective function is
\[\Phi(\mathbf{y})= \quad \sum_{j} p_j \log\frac{p_j}{\beta_j} - \sum_{ij}b_{ij}\log u_{ij}.\]
The space of feasible solutions is
\begin{eqnarray*}
\mathcal{S}&=&\{\mathbf{b}\in\mathbb{R}^{n\times m},\mathbf{\beta}\in\mathbb{R}^n:b_{ij}\geq 0\forall i,j,u_{ij}/\sum_i b_{ij} \leq 1/\beta_i\ \forall i,j,\\
&&\sum_j b_{ij}\geq 1 \forall i\}.
\end{eqnarray*}
The space of feasible $\mathbf{b}$'s fixing the values of $\mathbf{\beta}$ is
\begin{eqnarray*}
\mathcal{S}_\mathbf{\beta}&=&\{\mathbf{b}\in\mathbb{R}^{n\times m}:b_{ij}\geq 0\forall i,j,u_{ij}/\sum_i b_{ij} \leq 1/\beta_i\ \forall i,j,\\
&&\sum_j b_{ij}\geq 1 \forall i\}.
\end{eqnarray*}

From \cite{devanur:garg}, the following condition is necessary for the existence of an equilibrium.
(The argument is provided in the appendix.)\\\\
(*) \emph{For every strongly connected component $S\subseteq E$ of the digraph $(A,E)$,
if $|S| = 1$, then there is a loop incident to the node in $S$.}\\

The result from \cite{devanur:garg} shows the feasibility of such a convex program.
\begin{theorem}[\textsc{Thm}~1.1 of \cite{devanur:garg}] \label{thm:existence}
Consider an instance of the linear Arrow-Debreu market given by graph $(A,E)$ and
the utilities $u : E\rightarrow\mathbb{R}_+$. The convex program (\ref{eq:cp}) is feasible if and
only if (*) holds, and, in this case, \textbf{the optimum value is 0}, and the prices $p_i$ in an
optimal solution give a market equilibrium with allocations $x_{ij} = b_{ij}/p_j$.
Further, if all utilities are rational numbers, then there exists a market equilibrium with all prices
and allocations also rational and of a bitsize polynomially bounded in the input size.
\end{theorem}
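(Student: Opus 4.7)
The plan is to decompose the statement into four pieces: (i) $\Phi\geq 0$ on the feasible set, (ii) $\Phi=0$ at a feasible point is equivalent to a market equilibrium under $x_{ij}=b_{ij}/p_j$, (iii) feasibility itself is equivalent to (*), and (iv) rationality of the optimum. I would do (i) and (ii) first because they are self-contained algebraic manipulations that also tell us what a zero-value feasible point must encode, which in turn motivates the construction needed for (iii).

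For (i), I would use $\sum_i b_{ij}=p_j$ to rewrite $\sum_j p_j\log p_j=\sum_{ij}b_{ij}\log p_j$, giving
\[
\Phi(\mathbf{y}) = \sum_{ij} b_{ij}\log\frac{p_j}{u_{ij}} - \sum_j p_j\log\beta_j.
\]
The constraint $u_{ij}/p_j\leq 1/\beta_i$ forces $\log(p_j/u_{ij})\geq\log\beta_i$, so
\[
\sum_{ij}b_{ij}\log\frac{p_j}{u_{ij}}\ \geq\ \sum_i\Bigl(\sum_j b_{ij}\Bigr)\log\beta_i \;=\; \sum_i p_i\log\beta_i,
\]
which equals $\sum_j p_j\log\beta_j$ after relabeling, since agents and goods share the same index set in a bijective market. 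Hence $\Phi\geq 0$. For (ii), equality above forces $b_{ij}>0\Rightarrow u_{ij}/p_j=1/\beta_i$, i.e., each agent $i$ only spends on her best-bang-per-buck goods (buyer optimality). Combined with $\sum_i b_{ij}=p_j$ (equivalently $\sum_i x_{ij}=1$, market clearance) and $\sum_j b_{ij}=p_i$ (income equals spending), a zero-value feasible point is exactly a market equilibrium, and conversely every equilibrium produces a feasible point of value $0$.

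For (iii), necessity of (*) follows from a flow argument on the DAG of SCCs: if some singleton SCC $\{i\}$ has no self-loop, then all of $i$'s outflow $\sum_j b_{ij}=p_i\geq 1$ leaves $i$'s SCC and, by the SCC definition, never returns; propagating this imbalance along the topological order of SCCs eventually breaks the balance equation $\sum_i b_{ij}=p_j$ with $p_j\geq 1$ in a sink SCC. Sufficiency is the main obstacle; I would follow the Devanur--Garg construction: within each SCC, (*) guarantees either an internal cycle (when $|S|\geq 2$) or a self-loop along which money can circulate, and a combinatorial max-flow/cycle-canceling argument stitches together prices $p_i\geq 1$, balanced payments $b_{ij}$, and consistent values $\beta_i$ equal to the common best bang per buck on the support, thereby producing a zero-value feasible point.

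For (iv), once an equilibrium exists, its support fixes the tight bang-per-buck ratios and the balance equations, turning the equilibrium conditions into a rational linear system in $\mathbf{b}$ and $\mathbf{p}$; standard bitsize bounds on basic feasible solutions of such systems, or equivalently the explicit combinatorial algorithm of Devanur and Garg, yield rational prices and allocations of polynomial bitsize in the input.
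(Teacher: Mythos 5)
The first thing to note is that the paper does not actually prove this theorem: it is imported verbatim as Theorem~1.1 of Devanur, Garg, and V\'egh \cite{devanur:garg}, and the only fragment the paper reproduces is the appendix argument for the necessity of condition (*) (if $\{k\}$ is a singleton strongly connected component without a loop, all the money of $T\cup\{k\}$ lands in $T$, forcing $p_k=0$ and contradicting $p_k\geq 1$). Your parts (i) and (ii) are correct and self-contained: the identity $\sum_j p_j\log p_j=\sum_{ij}b_{ij}\log p_j$, the lower bound via $\log(p_j/u_{ij})\geq\log\beta_i$ on the support of $\mathbf{b}$, and the relabeling $\sum_i p_i\log\beta_i=\sum_j p_j\log\beta_j$ (valid because $\sum_j b_{ij}=p_i$ and $A=G$ in a bijective market) do establish $\Phi\geq 0$, and the equality analysis correctly identifies zero-value feasible points with market equilibria under $x_{ij}=b_{ij}/p_j$. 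Your necessity argument for (*) is in substance the same conservation/cut argument as the paper's appendix, merely bookkept over the condensation of strongly connected components rather than over the reachable set $T$.

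The gap is in the two places where the real mathematical weight of the theorem sits: the sufficiency of (*) (i.e., that a feasible --- and hence, by (i)--(ii), a zero-value --- point exists) and the rationality and polynomial-bitsize claim. For both you write ``follow the Devanur--Garg construction'' and gesture at a max-flow/cycle-canceling argument and at standard bounds on basic solutions of rational linear systems, but you do not supply the construction, and the existence direction cannot be dispatched by the algebra of (i)--(ii): one must actually exhibit prices $p_i\geq 1$ and payments $b_{ij}$ satisfying the bang-per-buck constraints simultaneously with both families of flow-balance constraints, which is precisely where \cite{devanur:garg} needs its combinatorial machinery (or where one would instead invoke the Arrow--Debreu existence theorem and then argue rationality separately). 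As a standalone proof the proposal is therefore incomplete; as a reconstruction of what this paper itself provides, however, it covers everything the paper writes down and somewhat more, since the paper likewise defers exactly these two points to the cited reference.
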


We aim to design partially-distributed algorithms to solve this rational convex program.

\section{Alternating Algorithm} \label{sec:algo}
Let $\mathcal{K}$ be the convex feasible space with diameter $d$,
and the objective $\phi: \mathcal{K}\rightarrow\mathbb{R}$ is a convex function satisfying the property that there exists positive $\lambda,\gamma\in\mathbb{R}$ such that for any $\xi\in\mathcal{K}$, $0\preceq\nabla^2\phi(\xi)\preceq\lambda I$, and $\|\nabla\phi(\xi)\|_2\leq\gamma$. Note that $d$ and $\gamma$ are bounds that are commonly needed for convergence analysis while how to choose $\lambda$ can be shown in a given problem.

In our case, assume $0\leq\Phi(\mathbf{y})\leq 1$ for any $\mathbf{y}$ w.l.o.g. (by Theorem~\ref{thm:existence}), and define $\phi_\mathbf{\beta}(\mathbf{b})=\Phi(\mathbf{b},\mathbf{\beta})$,
which is the objective function fixing a value of $\beta$;
$\mathcal{K}$ consists of the feasible solutions
\begin{eqnarray*}
\mathcal{S}_{\mathbf{\beta}}&=&\{\mathbf{b}\in\mathbb{R}^{n\times m}:b_{ij}\geq 0\forall i,j,u_{ij}/\sum_i b_{ij} \leq 1/\beta_i\ \forall i,j,\\
&&\sum_j b_{ij}\geq 1 \forall i\},
\end{eqnarray*}
given the values of $\mathbf{\beta}$.
For all $t$, we have that
$\mathcal{S}_{\mathbf{\beta}_t}$ satisfy the diameter bound $d$; $\phi_{\mathbf{\beta}_t}$ satisfy the property with appropriate settings of $\gamma$ and $\lambda$ such that $\gamma$ is an upper bound on each $\|\nabla\phi_{\mathbf{\beta}_t}(\cdot)\|_2$ while
setting $\lambda$ is non-trivial:
specifically, we have the following proposition, whose proof is in Appendix~\ref{pro:hessian}.
\begin{proposition} \label{pro:hessian}
For all $\mathcal{K}_t:=\mathcal{S}_{\mathbf{\beta}_t}$ and $\xi\in\mathcal{K}_t$,
$0\preceq\nabla^2\phi_{\mathbf{\beta}_t}(\xi)\preceq\lambda I$ with $\lambda:=n$.
\end{proposition}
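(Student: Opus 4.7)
The plan is to compute $\nabla^2\phi_{\mathbf{\beta}_t}(\xi)$ directly and exploit the resulting block structure. Write $\phi_\beta(\mathbf{b})=\sum_j p_j\log(p_j/\beta_j)-\sum_{ij}b_{ij}\log u_{ij}$ and observe that, with $\beta$ fixed, the pieces $-\sum_j p_j\log\beta_j=-\sum_{ij}b_{ij}\log\beta_j$ and $-\sum_{ij}b_{ij}\log u_{ij}$ are linear in $\mathbf{b}$ and therefore contribute nothing to the Hessian. Only $F(\mathbf{b}):=\sum_j p_j\log p_j$ matters, and since $p_j=\sum_i b_{ij}$, a direct computation gives
\[
\frac{\partial F}{\partial b_{ij}}=\log p_j+1,\qquad \frac{\partial^2 F}{\partial b_{ij}\,\partial b_{k\ell}}=\frac{\delta_{j\ell}}{p_j}.
\]
Hence $\nabla^2\phi_{\mathbf{\beta}_t}(\xi)$ is block diagonal, with one $n\times n$ block per good $j$, each block equal to the rank-one matrix $M_j:=\tfrac{1}{p_j}\mathbf{1}_n\mathbf{1}_n^\top$.

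Next I would read off the spectrum. The matrix $\mathbf{1}_n\mathbf{1}_n^\top$ has eigenvalues $n$ (simple, along $\mathbf{1}_n$) and $0$ (with multiplicity $n-1$), so $M_j$ has spectrum $\{n/p_j,0,\ldots,0\}$. This immediately yields the lower bound $\nabla^2\phi_{\mathbf{\beta}_t}(\xi)\succeq 0$, which is equivalent to convexity of $\phi_\beta$ on $\mathcal{K}_t$. For the upper bound I need $n/p_j\leq n$ for every $j$, i.e.\ $p_j\geq 1$. This is where feasibility enters: the convex program \eqref{eq:cp} imposes $\sum_i b_{ij}=p_j$ together with $p_i\geq 1$ for all $i\in A$, and under the bijective identification of agents with goods these combine to give $p_j\geq 1$ at every $\xi\in\mathcal{K}_t$. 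Consequently every block satisfies $M_j\preceq n I_n$, and the block-diagonal structure promotes this to $\nabla^2\phi_{\mathbf{\beta}_t}(\xi)\preceq nI$, which is exactly the claim with $\lambda=n$.

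I do not expect a real obstacle: the bulk of the proof is the one-line Hessian calculation and the elementary spectral fact about $\mathbf{1}\mathbf{1}^\top$. The only point that deserves a sentence of care is justifying $p_j\geq 1$, since the description of $\mathcal{S}_{\mathbf{\beta}_t}$ records only $\sum_j b_{ij}\geq 1$; the justification uses the market-clearing and income equalities of \eqref{eq:cp} to transfer this bound from agents to goods in the bijective setting.
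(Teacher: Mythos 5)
Your proof is correct and follows essentially the same route as the paper's: both reduce the Hessian to the per-good rank-one blocks $\tfrac{1}{p_j}\mathbf{1}\mathbf{1}^\top$ and bound them by $nI$ using $p_j\geq 1$, the only difference being that the paper phrases the eigenvalue bound $\mathbf{1}\mathbf{1}^\top\preceq nI$ as a Cauchy--Schwarz step on the quadratic form $\sum_j\tfrac{1}{p_j}(\sum_i z_{ij})^2$ (and its stated nonzero-entry condition ``$i=k$'' is a typo for $j=l$, which you have right). Your closing caveat about $p_j\geq 1$ is well taken: the definition of $\mathcal{S}_{\mathbf{\beta}}$ literally records only the row-sum bound $\sum_j b_{ij}\geq 1$ while the paper silently invokes ``$p_i\geq 1$'' for the column sums, so your one-sentence transfer of the bound from agents to goods via the bijective identification fills a gap the paper leaves implicit.
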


We can obtain the gradient $\nabla\phi_{\mathbf{\beta}_t}(\mathbf{b})$ with respect to $\mathbf{b}$ by taking the derivative of $b_{ij}$ for each $i,j$, i.e.,
\begin{eqnarray}
\nabla\phi_{\mathcal{\beta}_t}(\mathbf{b})=(\frac{\partial\phi_{\mathbf{\beta}_t}(\mathbf{b})}{\partial b_{ij}})_{ij}
=(1-\log\frac{\beta_j u_{ij}}{p_j})_{ij}.
\end{eqnarray}
Let $\mathbf{b}_t\in\mathcal{K}_t$, $\mathbf{b}'=\mathbf{b}_t-\eta\nabla\phi_{\mathbf{\beta}_t}(\mathbf{b}_t)$, and $\mathbf{b}_{t+1}=\Pi_{\mathcal{K}_t}(\mathbf{b}')=\arg\min_{\mathbf{z}\in\mathcal{K}_t}\|\mathbf{z}-\mathbf{b}'\|_2^2$, for a learning rate $\eta\leq\frac{\lambda d^2}{\gamma^2}$, which affects the convergence rate and can be chosen later.
%Denote that $\mathbf{y}'=(\mathbf{b}',\mathbf{\beta}')$ and $\tilde{\mathbf{y}}=(\mathbf{b}',\mathbf{\beta}_t)$.
The idea is to keep the current values of $\mathbf{\beta}_t$ but update $\mathbf{b}$ using the gradient with respect to $\mathbf{b}$.
We denote component $i,j$ of $\mathbf{b}$ at $t$ as $\mathbf{b}_t(i,j)$ and component $i$ of $\mathbf{\beta}$ at $t$ as $\mathbf{\beta}_t(i)$.

\begin{algorithm} \label{eq:alg}
\caption{Alternating algorithm}
\begin{algorithmic}[1]
\STATE $\mathbf{b}_{t+1}=\Pi_{\mathcal{K}_t}(\mathbf{b}_t-\eta\nabla\phi_{\mathbf{\beta}_t}(\mathbf{b}_t))$.
\STATE $\beta_{t+1}(i) = \min_j p_{t+1}(j)/u_{ij}$ for each $i$ where $p_{t+1}(j)=\sum_i b_{t+1}(i,j)$.
\end{algorithmic}
\end{algorithm}
Thus, this is: (1) first moving from point $(\mathbf{b}_t,\mathbf{\beta}_t)$ to point $(\mathbf{b}_{t+1},\mathbf{\beta}_t)$, and (2) then moving from point $(\mathbf{b}_{t+1},\mathbf{\beta}_t)$ to point $(\mathbf{b}_{t+1},\mathbf{\beta}_{t+1})$.
Repeat these two steps, and we need to show that the objective value converges.
That is, we have to show that Step (2) does not increase the objective value given $\mathbf{b}_{t+1}$,
and that the gradient step decreases the current objective value by some amount.
We do this by using a new analysis for gradient descents with a convex objective.
We derive a market equilibrium for a linear Arrow-Debreu market.

\subsection{Convergence}
We introduce the following lemma that is for any immediate objective values from standard gradient-descent updates.
Using this lemma whose technical part will be proved in the next subsection, we then obtain the convergence result for our alternating algorithm, i.e., Theorem~\ref{thm:rate}.
Intuitively, this critical lemma means given a fixed value of $\mathbf{\beta}$, under some condition the gradient descents guarantee that the update of $\mathbf{b}$ decreases the objective value by a certain amount.
The update of $\mathbf{\beta}$ only speeds up the convergence process.
\begin{lemma} \label{thm:convergence}
Let $\textbf{b}_t\in\mathcal{K}_t$, $\textbf{b}'=\textbf{b}_t-\eta\nabla\phi_{\beta_t}(\textbf{b}_t)$, and $\mathbf{b}_{t+1}=\Pi_{\mathcal{K}_t}(\textbf{b}')=\arg\min_{\mathbf{z}\in\mathcal{K}_t}\|\mathbf{z}-\mathbf{b}'\|_2^2$,
for a learning rate $\eta\leq\frac{\lambda d^2}{\gamma^2}$.
Then, whenever $\phi_{\mathbf{\beta}_t}(\mathbf{b}_t)-\phi_{\mathbf{\beta}_t}(\mathbf{q})\geq d\gamma\sqrt{6\lambda\eta}$ for $\textbf{q}_t=\arg\min_{\mathbf{z}\in\mathcal{K}_t}\phi_{\mathbf{\beta}_t}(\mathbf{z})$,
\begin{eqnarray}
\phi_{\mathbf{\beta}_t}(\mathbf{b}_{t+1})\leq\phi_{\mathbf{\beta}_t}(\mathbf{b}_t)-\frac{\eta}{4d^2}(\phi_{\mathbf{\beta}_t}(\mathbf{b}_t)-\phi_{\mathbf{\beta}_t}(\mathbf{q}_t))^2,
\end{eqnarray}
which implies that
\begin{eqnarray}
\phi_{\mathbf{\beta}_t}(\mathbf{b}_{t+1})\leq\phi_{\mathbf{\beta}_t}(\mathbf{b}_t)-\frac{3\lambda\eta^2\gamma^2}{2}.
\end{eqnarray}
\end{lemma}
\begin{proof}%[Proof of Theorem~\ref{thm:convergence}]
%Define function $\Phi$ by fixing $\mathbf{\beta}_t$ in function $\phi$, i.e., $\Phi(\mathbf{b}_t)=\phi(\mathbf{b}_t,\mathbf{\beta}_t)$.
%Hence, $\nabla\Phi(\mathbf{b}_t)=\nabla_{\mathbf{b}_t}\phi(\mathbf{b}_t,\mathbf{\beta}_t)$.

By Taylor's expansion at point $(\mathbf{b}_{t+1},\mathbf{\beta}_t)$ and the property of $\phi_{\mathbf{\beta}_t}$ in terms of $\lambda$,
\begin{eqnarray} \label{eq:taylor}
\phi_{\mathbf{\beta}_t}(\mathbf{b}_{t+1})\leq\phi_{\mathbf{\beta}_t}(\mathbf{b}_t)+\langle\nabla\phi_{\mathbf{\beta}_t}(\mathbf{b}_t),\mathbf{b}_{t+1}-\mathbf{b}_t\rangle+&\frac{\lambda}{2}\|\mathbf{b}_{t+1}-\mathbf{b}_t\|_2^2.
\end{eqnarray}
The last term in the above inequality is at most,
by $\mathbf{b}'=\mathbf{b}_t-\eta\nabla\phi_{\mathbf{\beta}_t}(\mathbf{b}_t)$ and the definition of projection,
\begin{eqnarray*}
&&\frac{\lambda}{2}\|\mathbf{b}_{t+1}-\mathbf{b}_t\|_2^2\leq\frac{1}{2}\lambda\eta^2\|\nabla\phi_{\mathbf{\beta}_t}(\mathbf{b}_t)\|_2^2
\leq\frac{1}{2}\lambda\eta^2\gamma^2.
\end{eqnarray*}

To bound the second term in (\ref{eq:taylor}), note that by the convexity of $\phi_{\mathbf{\beta}_t}$,
\[\phi_{\mathbf{\beta}_t}(\textbf{q}_t)\geq\phi_{\mathbf{\beta}_t}(\mathbf{b}_t)+\langle\nabla\phi_{\mathbf{\beta}_t}(\mathbf{b}_t),\mathbf{q}_t-\mathbf{b}_t\rangle,\]
which implies that
\[-\langle\nabla\phi_{\mathbf{\beta}_t}(\mathbf{b}_t),\mathbf{q}_t-\mathbf{b}_t\rangle\geq\phi_{\mathbf{\beta}_t}(\mathbf{b}_t)-\phi_{\mathbf{\beta}_t}(\mathbf{q}_t)\geq 0.\]
Let $\Delta=-\langle\nabla\phi_{\mathbf{\beta}_t}(\mathbf{b}_t),\mathbf{q}-\mathbf{b}_t\rangle$,
and we know from above that $\Delta^2\geq(\phi_{\mathbf{\beta}_t}(\mathbf{b}_t)-\phi_{\mathbf{\beta}_t}(\mathbf{q}))^2$.
Now we have the following lemma whose proof is in the next subsection.
\begin{lemma} \label{lem:convergence}
When $\Delta\geq 2\eta\gamma^2$,
\begin{eqnarray}
\langle\nabla\phi_{\mathbf{\beta}_t}(\mathbf{b}_t),\mathbf{b}_{t+1}-\mathbf{b}_t\rangle\leq-\frac{\eta}{3d^2}(\phi_{\mathbf{\beta}_t}(\mathbf{b}_t)-\phi_{\mathbf{\beta}_t}(\mathbf{q}_t))^2.
\end{eqnarray}
\end{lemma}

%By Lemma~\ref{lem:convergence}, we know that whenever $\phi_{\mathbf{\beta}_t}(\mathbf{b}_t)-\phi_{\mathbf{\beta}_t}(\mathbf{q})\geq d\gamma\sqrt{6\lambda\eta}$ for $q=\arg\min_{\mathbf{z}\in\mathcal{K}_t}\phi_{\mathbf{\beta}_t}(\mathbf{z})$,
%\begin{eqnarray*}
%\langle\nabla\phi_{\mathbf{\beta}_t}(\mathbf{b}_t),\mathbf{b}_{t+1}-\mathbf{b}_t\rangle&\leq&-\frac{\eta}{3d^2}(\phi_{\mathbf{\beta}_t}(\mathbf{b}_t)-\phi_{\mathbf{\beta}_t}(\mathbf{q}))^2\\
%&\leq&-2\lambda\eta^2\gamma^2
%\end{eqnarray*}
%where $\nabla\phi_{\mathbf{\beta}_t}(\mathbf{b}_t)=(\partial\phi_{\mathbf{\beta}_t}(\mathbf{b}_t)/\partial b_{ij})_{ij}$.

Suppose that $\phi_{\mathbf{\beta}_t}(\mathbf{b}_t)-\phi_{\mathbf{\beta}_t}(\mathbf{q}_t)\geq d\gamma\sqrt{6\lambda\eta}$, implying that $\Delta\geq d\gamma\sqrt{6\lambda\eta}$. By $\eta\leq\frac{\lambda d^2}{\gamma^2}$, i.e., $d\geq\gamma\sqrt{\frac{\eta}{\lambda}}$. This implies that
\[\Delta\geq d\gamma\sqrt{6\lambda\eta}\geq 2\eta\gamma^2,\]
which satisfies the condition of Lemma~\ref{lem:convergence}.
Therefore,
\begin{eqnarray*}
\phi_{\mathbf{\beta}_t}(\mathbf{b}_{t+1})&\leq&\phi_{\mathbf{\beta}_t}(\mathbf{b}_t)+\langle\nabla\phi_{\mathbf{\beta}_t}(\mathbf{b}_t),\mathbf{b}_{t+1}-\mathbf{b}_t\rangle+\frac{\lambda}{2}\|\mathbf{b}_{t+1}-\mathbf{b}_t\|_2^2\\
&\leq&\phi_{\mathbf{\beta}_t}(\mathbf{b}_t)-\frac{\eta}{3d^2}\Delta^2+\frac{\lambda\eta^2\gamma^2}{2}\\
&\leq&\phi_{\mathbf{\beta}_t}(\mathbf{b}_t)-\frac{\eta}{4d^2}\Delta^2,
\end{eqnarray*}
where the last inequality comes from $\Delta\geq d\gamma\sqrt{6\lambda\eta}$, i.e., $\lambda\eta\gamma^2\leq\frac{\Delta^2}{6d^2}$.
We obtain that
\begin{eqnarray*}
\phi_{\mathbf{\beta}_t}(\mathbf{b}_{t+1})\leq\phi_{\mathbf{\beta}_t}(\mathbf{b}_t)-\frac{\eta}{4d^2}(\phi_{\mathbf{\beta}_t}(\mathbf{b}_t)-\phi_{\mathbf{\beta}_t}(\mathbf{q}_t))^2,
\end{eqnarray*}
which, by $\phi_{\mathbf{\beta}_t}(\mathbf{b}_t)-\phi_{\mathbf{\beta}_t}(\mathbf{q}_t)\geq d\gamma\sqrt{6\lambda\eta}$, implies that
\begin{eqnarray*}
\Phi(\mathbf{b}_{t+1},{\mathbf{\beta}_t})\leq\Phi(\mathbf{b}_t,{\mathbf{\beta}_t})-\frac{3\lambda\eta^2\gamma^2}{2}.
\end{eqnarray*}
\end{proof}

Finally, the guarantee for Step~(2) of Algorithm~1 is not hard to see:
for each $i$, we know that $\beta_{t+1}(i)$ is the best possible value if we fix values of $b_{ij}$ to $b_{t+1}(i,j)$
so $\mathbf{\beta}_{t+1}$ never gives a larger objective value than $\mathbf{\beta}_t$.

\begin{lemma} \label{lem:opt}
By Step~(2) of Algorithm~1,
\begin{eqnarray} \label{eq:opt}
\Phi(\mathbf{b}_{t+1},\mathbf{\beta}_{t+1})\leq\Phi(\mathbf{b}_{t+1},\mathbf{\beta}_t).
\end{eqnarray}
\end{lemma}

The following main theorem gives the convergence rate based on the results of Lemma~\ref{thm:convergence} and Lemma~\ref{lem:opt}.
\begin{theorem} \label{thm:rate}
For $\eta=O(\frac{1}{t^{2/5}})$,
\begin{eqnarray}
\Phi(\mathbf{y}_t)-\Phi(\mathbf{y}^*)\leq\frac{2}{3\lambda\eta^2\gamma^2 t}=O(\frac{1}{\eta^2 t}).
\end{eqnarray}
\end{theorem}
\begin{proof}
By Lemma~\ref{thm:convergence} and Lemma~\ref{lem:opt} combined, we have that
\begin{eqnarray*}
\Phi(\mathbf{y}_{t+1})\leq\Phi(\mathbf{y}_t)-\frac{3\lambda\eta^2\gamma^2}{2},
\end{eqnarray*}
whenever $\phi_{\mathbf{\beta}_t}(\mathbf{b}_t)-\phi_{\mathbf{\beta}_t}(\mathbf{q}_t)\geq d\gamma\sqrt{6\lambda\eta}$ for $\mathbf{q}_t=\arg\min_{\mathbf{z}\in\mathcal{K}_t}\phi_{\mathbf{\beta}_t}(\mathbf{z})$,
from which we obtain the following claim by $\Phi(\cdot)\leq 1$:
\begin{eqnarray} \label{eq:recurrence}
\Phi(\mathbf{y}_{t+1})\leq\Phi(\mathbf{y}_t)-\frac{3\lambda\eta^2\gamma^2}{2}\Phi^2(\mathbf{y}_t).
\end{eqnarray}
Thus, we have that
\[\frac{1}{\Phi(\mathbf{y}_t)}-\frac{1}{\Phi(\mathbf{y}_{t-1})}\geq\frac{3\lambda\eta^2\gamma^2}{2}\cdot\frac{\Phi^2(\mathbf{y}_{t-1})}{\Phi(\mathbf{y}_{t-1})\Phi(\mathbf{y}_t)}\geq\frac{3\lambda\eta^2\gamma^2}{2},\]
where the first inequality is Inequality~(\ref{eq:recurrence}).
It can be concluded that 
\[\frac{1}{\Phi(\mathbf{y}_t)}\geq\frac{3\lambda\eta^2\gamma^2}{2}t,\]
from which along with the condition we have that
\[d\gamma\sqrt{6\lambda\eta}\leq\phi_{\mathbf{\beta}_t}(\mathbf{b}_t)-\phi_{\mathbf{\beta}_t}(\mathbf{q}_t)\leq\Phi(\mathbf{y}_t)-\Phi(\mathbf{y}^*)\leq\frac{2}{3\lambda\eta^2\gamma^2 t}.\]
Therefore, $\eta$ should be set as at most 
\[\frac{2}{3\sqrt{6}d\gamma^3\lambda^{3/2}t^{2/5}}=O(\frac{1}{t^{2/5}}).\]
\end{proof}

%\subsubsection{Convergence time} The value of the objective function has an upper bound, and the objective value decreases by at least $3\lambda\eta^2\gamma^2/2$ until the value of the objective function $\Phi$ reaches the set of $(\mathbf{b},\mathbf{\beta})$ that give values within $d\gamma\sqrt{6\lambda\eta}$ from the minimum. Thus, one can upper bound the convergence time.

\subsection{Proof of Lemma~\ref{lem:convergence}}
%By Taylor's expansion and the property of $\phi$ in terms of $\lambda$,
%\begin{eqnarray*}
%\phi(\mathbf{b}')\leq\phi(\mathbf{b})+\langle\nabla\phi(\mathbf{b}),\mathbf{b}'-\mathbf{b}\rangle+\frac{\lambda}{2}\|\mathbf{b}'-\mathbf{b}\|_2^2.
%\end{eqnarray*}
%The last term in (\ref{eq:taylor}) is at most, by $\mathbf{y}''=\mathbf{y}_t-\eta\nabla\phi(\mathbf{y}_t)$ and the definition of projection,
%\[\frac{\lambda}{2}\|\mathbf{y}_{t+1}-\mathbf{y}_t\|_2^2\leq\frac{1}{2}\lambda\eta^2\|\nabla\phi(\mathbf{y}_t)\|_2^2\leq\frac{1}{2}\lambda\eta^2\gamma^2.\]

%To bound the second term in (\ref{eq:taylor}), note that by the convexity of $\phi$,
%\[\phi(q)\geq\phi(\mathbf{b})+\langle\nabla\phi(\mathbf{b}),\mathbf{q}-\mathbf{b}\rangle,\]
%which implies that
%\[-\langle\nabla\phi(\mathbf{b}),\mathbf{q}-\mathbf{b}\rangle\geq\phi(\mathbf{b})-\phi(\mathbf{q})\geq 0.\]
%Let $\Delta=-\langle\nabla\phi(\mathbf{b}),\mathbf{q}-\mathbf{b}\rangle$,
%and we know from above that $\Delta^2\geq(\phi(\mathbf{b})-\phi(\mathbf{q}))^2$.
Now we are ready to show that when $\Delta\geq 2\eta\gamma^2$,
\begin{eqnarray} \label{eq:inner}
\langle\nabla\phi(\mathbf{b}),\mathbf{b}'-\mathbf{b}\rangle\leq-\frac{\eta\Delta^2}{3d^2}.
\end{eqnarray}
Note that $\phi(\mathbf{b})-\phi(\mathbf{q})\geq d\gamma\sqrt{6\lambda\eta}$ implies that $\Delta\geq d\gamma\sqrt{6\lambda\eta}$.
By $\eta\leq\frac{\lambda d^2}{\gamma^2}$, we have $d\sqrt{6\lambda}\geq 2\gamma\sqrt{\eta}$,
and thus $\Delta\geq d\gamma\sqrt{6\lambda\eta}\geq 2\eta\gamma^2$, which satisfies the condition for Inequality~(\ref{eq:inner}).

Consider the triangle formed by $\mathbf{b},\mathbf{b}',\bar{\mathbf{b}}$.
Let $\alpha_1,\alpha_2,\alpha_3$ denote the angles that correspond to the points $\mathbf{b},\mathbf{b}',\bar{\mathbf{b}}$, respectively.
Since $\mathbf{b}'=\Pi_{\mathcal{K}}(\bar{\mathbf{b}})$ and $\mathbf{b}\in\mathcal{K}$,
we know that $\alpha_2\geq\frac{\pi}{2}$ and thus \textcolor[rgb]{1.00,0.00,0.00}{$\alpha_1\leq\frac{\pi}{2}$}.
(When $\mathbf{y}'=\mathbf{y}''$, we use the convention that $\alpha_2=\alpha_3=\frac{\pi}{2}$.)
As $\bar{\mathbf{b}}=\mathbf{b}-\eta\nabla\phi(\mathbf{b})$, we have that
\begin{eqnarray*}
\langle\nabla\phi(\mathbf{b}),\mathbf{b}'-\mathbf{b}\rangle&=&\|\nabla\phi(\mathbf{b})\|_2\|\mathbf{b}'-\mathbf{b}\|_2\cos(\pi-\alpha_1)\\
&=&-\|\nabla\phi(\mathbf{b})\|_2\|\mathbf{b}'-\mathbf{b}\|_2\cos\alpha_1,
\end{eqnarray*}
which is at most 0 since $\cos\alpha_1\geq 0$. Moreover, because $\alpha_2\geq\frac{\pi}{2}$, we have that
\[\|\mathbf{b}'-\mathbf{b}\|_2\geq\|\mathbf{b}-\bar{\mathbf{b}}\|_2\sin\alpha_3=\eta\|\phi(\mathbf{b})\|_2\sin\alpha_3.\]
Thus, we derive that
\[\langle\nabla\phi(\mathbf{b}),\mathbf{b}'-\mathbf{b}\rangle\leq-\eta\|\phi(\mathbf{b})\|_2^2\cos\alpha_1\sin\alpha_3.\]
What is left to be shown are the lower bounds for $\cos\alpha_1$ and $\sin\alpha_3$.

Consider the triangle formed by $\mathbf{b}, \mathbf{q}, \mathbf{b}'$, and let $\alpha_4$ denote the angle at point $\mathbf{b}$.
Then, we need the following two claims, whose proofs will be provided in the following two subsections.
\begin{lemma} \label{lm:1}
If $\Delta\geq 2\eta\gamma^2$, $\alpha_4\geq\alpha_1$.
\end{lemma}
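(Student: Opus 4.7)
The plan is to convert the angle inequality into a scalar comparison of cosines and attack it using the variational characterization of the Euclidean projection together with the hypothesis on $\Delta$. With the abbreviations $\mathbf{v} = \mathbf{b}' - \mathbf{b}$, $\mathbf{w} = \bar{\mathbf{b}} - \mathbf{b} = -\eta\nabla\phi(\mathbf{b})$, and $\mathbf{u} = \mathbf{q} - \mathbf{b}$, so that $\cos\alpha_1 = \langle\mathbf{v},\mathbf{w}\rangle/(\|\mathbf{v}\|\|\mathbf{w}\|)$ and $\cos\alpha_4 = \langle\mathbf{v},\mathbf{u}\rangle/(\|\mathbf{v}\|\|\mathbf{u}\|)$, the claim $\alpha_4\geq\alpha_1$ is the scalar inequality
\[
\|\mathbf{w}\|\,\langle\mathbf{v},\mathbf{u}\rangle \;\leq\; \|\mathbf{u}\|\,\langle\mathbf{v},\mathbf{w}\rangle.
\]
Since the surrounding text has already shown $\alpha_1\leq\pi/2$ (from $\alpha_2\geq\pi/2$), one has $\cos\alpha_1\geq 0$, so I would first dispose of the easy case $\langle\mathbf{v},\mathbf{u}\rangle\leq 0$, in which $\alpha_4\geq\pi/2\geq\alpha_1$ is immediate; thereafter I may assume $\langle\mathbf{v},\mathbf{u}\rangle>0$.

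Next I would gather the three ingredients needed. The Euclidean projection's first-order optimality applied to the feasible test point $\mathbf{q}$ gives $\langle \bar{\mathbf{b}} - \mathbf{b}', \mathbf{q} - \mathbf{b}'\rangle \leq 0$, which after expansion reads
\[
\langle\mathbf{v},\mathbf{u}\rangle + \langle\mathbf{v},\mathbf{w}\rangle \;\geq\; \|\mathbf{v}\|^2 + \langle\mathbf{w},\mathbf{u}\rangle. \tag{P}
\]
The obtuse-angle condition $\alpha_2\geq\pi/2$ at $\mathbf{b}'$ is equivalent to $\langle\mathbf{v},\mathbf{w}\rangle \geq \|\mathbf{v}\|^2$, and together with the gradient bound $\|\nabla\phi(\mathbf{b})\|\leq\gamma$ it provides the length control $\|\mathbf{v}\|\leq\|\mathbf{w}\|\leq\eta\gamma$. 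Rewriting the hypothesis through $\Delta = \langle\mathbf{w},\mathbf{u}\rangle/\eta$ converts $\Delta \geq 2\eta\gamma^2$ into $\langle\mathbf{w},\mathbf{u}\rangle\geq 2\eta^2\gamma^2\geq 2\|\mathbf{w}\|^2$, and then Cauchy--Schwarz forces $\|\mathbf{u}\| \geq \langle\mathbf{w},\mathbf{u}\rangle/\|\mathbf{w}\| \geq 2\|\mathbf{w}\|$, so $\mathbf{u}$ is comparatively long.

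The decisive step, and where I expect the main difficulty, is combining these pieces into the scalar inequality. The natural route is to invoke (P) to replace $\langle\mathbf{v},\mathbf{u}\rangle$ in the target by the combination $\|\mathbf{v}\|^2 + \langle\mathbf{w},\mathbf{u}\rangle - \langle\mathbf{v},\mathbf{w}\rangle$, reducing everything to a polynomial relation among $\|\mathbf{v}\|$, $\|\mathbf{w}\|$, $\|\mathbf{u}\|$, $\langle\mathbf{v},\mathbf{w}\rangle$, and $\langle\mathbf{w},\mathbf{u}\rangle$, which one then verifies using $\|\mathbf{v}\|\leq\|\mathbf{w}\|\leq\eta\gamma$, $\langle\mathbf{v},\mathbf{w}\rangle\geq\|\mathbf{v}\|^2$, and $\|\mathbf{u}\|\geq 2\|\mathbf{w}\|$. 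The delicate aspect is that (P) only gives a one-sided (lower) bound on $\langle\mathbf{v},\mathbf{u}\rangle$, whereas the target calls for an upper bound of the ratio $\langle\mathbf{v},\mathbf{u}\rangle/\|\mathbf{u}\|$; the argument must therefore exploit the fact that $\|\mathbf{u}\|$ is large enough—by the factor $2$ in the hypothesis—to let the right-hand side $\|\mathbf{u}\|\langle\mathbf{v},\mathbf{w}\rangle$ dominate, which is precisely where the specific constant $2\eta\gamma^2$ is used.
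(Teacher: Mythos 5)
Your proof targets the wrong angle, and the step you defer as ``the delicate aspect'' is not a postponed computation but a genuine obstruction. The paper's phrase ``the triangle formed by $\mathbf{b},\mathbf{q},\mathbf{b}'$'' is evidently a typo for $\mathbf{b},\mathbf{q},\bar{\mathbf{b}}$: both the paper's own proof of this lemma (which works entirely inside the triangle $\mathbf{b}$--$\bar{\mathbf{b}}$--$\mathbf{q}$) and the later identity $-\Delta=\|\nabla\phi(\mathbf{b})\|_2\|\mathbf{q}-\mathbf{b}\|_2\cos(\pi-\alpha_4)$ require $\alpha_4$ to be the angle between $\mathbf{w}=-\eta\nabla\phi(\mathbf{b})$ and $\mathbf{u}=\mathbf{q}-\mathbf{b}$, not between $\mathbf{v}=\mathbf{b}'-\mathbf{b}$ and $\mathbf{u}$. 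Under your reading the claim is actually false: take $\phi(\mathbf{x})=\frac{c}{2}\|\mathbf{x}-(0,s)\|_2^2$ in the plane, $\mathbf{b}=(0,0)$, and $\mathcal{K}$ (a bounded piece of) the half-space $\{y\leq\epsilon x\}$. Then $\mathbf{w}\propto(0,1)$, the projection sends $\bar{\mathbf{b}}$ to a point with $\mathbf{v}\propto(1,\epsilon)$, so $\cos\alpha_1=\epsilon/\sqrt{1+\epsilon^2}$ and $\alpha_1\to\pi/2$; but the constrained minimizer $\mathbf{q}$ is the projection of $(0,s)$ onto the same half-space, hence $\mathbf{u}\propto(1,\epsilon)\propto\mathbf{v}$ and your $\alpha_4$ equals $0<\alpha_1$, while $\Delta\geq 2\eta\gamma^2$ holds for $\eta$ small. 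This is why your ``decisive step'' cannot close: (P), the obtuseness at $\mathbf{b}'$, and $\|\mathbf{u}\|\geq 2\|\mathbf{w}\|$ only \emph{lower}-bound $\langle\mathbf{v},\mathbf{u}\rangle$, and the sole available upper bound, Cauchy--Schwarz, reduces your target $\|\mathbf{w}\|\langle\mathbf{v},\mathbf{u}\rangle\leq\|\mathbf{u}\|\langle\mathbf{v},\mathbf{w}\rangle$ to $\langle\mathbf{v},\mathbf{w}\rangle\geq\|\mathbf{v}\|\|\mathbf{w}\|$, i.e.\ $\cos\alpha_1=1$, which is exactly what fails in the example.

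The paper instead argues by a planar contradiction. Rotate $\mathbf{b}'$ about the line through $\mathbf{b}$ and $\bar{\mathbf{b}}$ into the plane of $\mathbf{b},\bar{\mathbf{b}},\mathbf{q}$ to get $\bar{\mathbf{b}}'$, preserving the angle $\alpha_1$ at $\mathbf{b}$ and the distance to $\bar{\mathbf{b}}$. The hypothesis enters only as $\eta\leq\Delta/\gamma^2$, giving $\|\mathbf{b}-\bar{\mathbf{b}}'\|_2\leq\|\mathbf{b}-\bar{\mathbf{b}}\|_2=\eta\|\nabla\phi(\mathbf{b})\|_2\leq\Delta/\gamma\leq\|\mathbf{b}-\mathbf{q}\|_2$ (essentially your correct observation that $\|\mathbf{u}\|\geq 2\|\mathbf{w}\|\geq 2\|\mathbf{v}\|$). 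If $\alpha_4<\alpha_1$ with $\alpha_4$ measured at $\bar{\mathbf{b}}$'s side, the segment from $\mathbf{b}$ to $\mathbf{q}$ --- contained in $\mathcal{K}$ by convexity --- must cross the segment $\bar{\mathbf{b}}'\bar{\mathbf{b}}$ at a feasible point strictly closer to $\bar{\mathbf{b}}$ than $\|\bar{\mathbf{b}}'-\bar{\mathbf{b}}\|_2=\|\mathbf{b}'-\bar{\mathbf{b}}\|_2$, contradicting the minimality that defines the projection $\mathbf{b}'$. Your ingredients (projection optimality, the gradient bound, the length comparison) are the right raw material, but they must be aimed at the angle between $-\nabla\phi(\mathbf{b})$ and $\mathbf{q}-\mathbf{b}$ and combined through this ordering argument, not through a cosine inequality for the angle between $\mathbf{v}$ and $\mathbf{u}$.
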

\begin{lemma} \label{lm:2}
If $\Delta\geq 2\eta\gamma^2$, $\sin\alpha_3\geq\frac{\cos\alpha_4}{3}$.
\end{lemma}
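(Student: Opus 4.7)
The plan is to work in the 2D plane through $\mathbf{b}$, $\mathbf{b}'$, and $\bar{\mathbf{b}}$, and to combine the projection property with Lemma~\ref{lm:1} and the hypothesis $\Delta\geq 2\eta\gamma^2$. I would place $\mathbf{b}$ at the origin, $\bar{\mathbf{b}}=(L,0)$ with $L=\eta\|\nabla\phi(\mathbf{b})\|_2$, and $\mathbf{b}'=(a,b)$ with $b\geq 0$. Since $\cos\alpha_4$ depends on $q-\mathbf{b}$ only through its orthogonal projection into this plane, and the true cosine is at most the in-plane cosine whenever $\cos\alpha_4>0$, it suffices to prove the desired inequality against the in-plane projection of $q$. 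I would first dispose of the degenerate cases: when $\bar{\mathbf{b}}\in\mathcal{K}$ use the convention $\alpha_3=\pi/2$, and when $\cos\alpha_4\leq 0$ the claim is immediate from $\sin\alpha_3\geq 0$.

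In these coordinates a direct computation gives $\sin\alpha_3 = b/r$ and $\cos\alpha_1 = a/s$, with $s=\sqrt{a^2+b^2}$ and $r=\sqrt{(L-a)^2+b^2}$. Four ingredients then come into play. The projection optimality of $\mathbf{b}'$ applied to $\mathbf{b}\in\mathcal{K}$ gives $aL\geq s^2$; applied to $q\in\mathcal{K}$ it gives $(L-a)q_1-bq_2\leq aL-s^2$. The hypothesis $\Delta\geq 2\eta\gamma^2$ combined with $L\leq\eta\gamma$ forces $\eta\Delta = Lq_1\geq 2L^2$, i.e., $q_1\geq 2L$. Finally, Lemma~\ref{lm:1} yields $\cos\alpha_4\leq\cos\alpha_1 = a/s$. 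Substituting $q_1\geq 2L$ into the second inequality produces the quantitative lower bound $bq_2\geq (L-a)(2L-a)+b^2$, and the fourth ingredient reduces the goal to showing $3bs\geq ar$, equivalent (after squaring and using $s^2=a^2+b^2$) to the polynomial inequality $b^2(8a^2+9b^2)\geq a^2(L-a)^2$.

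The factor $1/3$ in the conclusion traces back to the factor $2$ in the hypothesis. In the extremal regime $\alpha_2=\pi/2$ (equivalently $aL=s^2$), direct computation yields $r=L\sin\alpha_1$ and $\sin\alpha_3=\cos\alpha_1$, so $3bs=3ar$, three times what the target demands. Away from this extremal regime, the quantitative lower bound $bq_2\geq (L-a)(2L-a)+b^2$ supplies the extra $b$ needed to make $b^2(8a^2+9b^2)$ dominate $a^2(L-a)^2$.

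The main obstacle will be the final polynomial verification uniformly over the admissible parameters $(a,b,L,q_2)$. The delicate sub-case is the near-collinear regime, where $b$ is small and $\mathbf{b}'$ lies close to the ray from $\mathbf{b}$ through $\bar{\mathbf{b}}$, since both sides of $b^2(8a^2+9b^2)\geq a^2(L-a)^2$ vanish simultaneously there. A short case split on whether $a$ is close to $L$ or bounded away from $L$, together with direct substitution of the quantitative lower bound on $b$ forced by the hypothesis, should close the calculation.
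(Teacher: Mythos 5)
Your setup is sound up through the four ingredients — the coordinates, the two projection inequalities $aL\geq s^2$ and $(L-a)q_1-bq_2\leq aL-s^2$, and the deduction $q_1\geq 2L$ are all correct — but the step ``the fourth ingredient reduces the goal to showing $3bs\geq ar$'' is where the argument breaks. Replacing $\cos\alpha_4$ by its upper bound $\cos\alpha_1=a/s$ turns the target $\sin\alpha_3\geq\cos\alpha_4/3$ into the strictly stronger claim $\sin\alpha_3\geq\cos\alpha_1/3$, and that stronger claim is false under your constraints. Concretely, take $L=1$, $a=1/2$, $b=0.01$, $q=(2,76)$ in your coordinates, with $\mathcal{K}$ of diameter $d\geq 80$ (nothing forbids $d\gg L=\eta\|\nabla\phi(\mathbf{b})\|$; typically $\eta\gamma\ll d$). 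Then $aL=0.5\geq s^2=0.2501$, $q_1\geq 2L$, and $(L-a)q_1-bq_2=1-0.76\leq aL-s^2=0.2499$, so all four ingredients are satisfied; yet $3bs\approx 0.015$ while $ar\approx 0.25$, and indeed your polynomial form $b^2(8a^2+9b^2)\approx 2\times 10^{-4}$ versus $a^2(L-a)^2=0.0625$. The lemma itself survives in this configuration only because $\cos\alpha_4\approx 0.046\ll\cos\alpha_1\approx 1$: when $\mathbf{q}-\mathbf{b}$ has a large component orthogonal to the gradient, $\cos\alpha_4$ is small and the required bound on $\sin\alpha_3$ is correspondingly weak. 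Your quantitative rescue $bq_2\geq(L-a)(2L-a)+b^2$ only yields $b\geq L(L-a)/q_2$ with $q_2$ bounded solely by $d$, which is far too weak to force $b\gtrsim(L-a)$; this is exactly the near-collinear regime you flagged as delicate, and it does not close.

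The fix is to keep the dependence on $\mathbf{q}$ inside $\cos\alpha_4$ rather than discarding it: in your coordinates you would need to prove $\frac{b}{r}\geq\frac{1}{3}\cdot\frac{aq_1+bq_2}{s\sqrt{q_1^2+q_2^2}}$, where the denominator $\sqrt{q_1^2+q_2^2}=\|\mathbf{q}-\mathbf{b}\|$ grows with $q_2$ and saves the inequality. This is precisely what the paper's proof does, by a different route: it chases angles in the triangle $\mathbf{b}$--$\bar{\mathbf{b}}$--$\mathbf{q}$, writes $\alpha_3\geq\frac{\pi}{2}-a_1-\alpha_4$ (so the conclusion naturally carries a factor $\cos\alpha_4$), and controls the small correction angle $a_1$ via $\sin a_1\leq\|\mathbf{b}-\bar{\mathbf{b}}\|_2/\|\mathbf{b}-\mathbf{q}\|_2\leq\eta\gamma^2\cos\alpha_4/\Delta\leq\cos\alpha_4/2$, using the hypothesis $\Delta\geq 2\eta\gamma^2$ together with $\|\mathbf{b}-\mathbf{q}\|_2\geq\Delta/(\|\nabla\phi(\mathbf{b})\|_2\cos\alpha_4)$. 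Note that the paper, too, uses Lemma~\ref{lm:1}, but only to locate $\bar{\mathbf{b}}'$ inside the triangle $\mathbf{b}$--$\bar{\mathbf{b}}$--$\mathbf{q}$, not to replace $\cos\alpha_4$ by $\cos\alpha_1$ in the target.
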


With these two claims, we obtain
\[\langle\nabla\phi(\mathbf{b},\mathbf{b}'-\mathbf{b})\rangle\leq-\frac{\eta\|\nabla\phi(\mathbf{b})\|_2^2\cos^2\alpha_4}{3}.\]
Recall that $-\Delta=\langle\nabla\phi(\mathbf{b}),\mathbf{q}-\mathbf{b}\rangle=
\|\nabla\phi(\mathbf{b})\|_2\|\mathbf{q}-\mathbf{b}\|_2\cos(\pi-\alpha_4)=
-\|\nabla\phi(\mathbf{b})\|_2\|\mathbf{q}-\mathbf{b}\|_2\cos\alpha_4$.
Thus,
\[\cos\alpha_4=\frac{\Delta}{\|\nabla\phi(\mathbf{b})\|_2\|\mathbf{q}-\mathbf{b}\|_2}\geq\frac{\Delta}{d\|\nabla\phi(\mathbf{b})\|_2},\]
and we have proven Inequality~(\ref{eq:inner}).

\subsubsection{Proof of Lemma~\ref{lm:1}}
Consider the triangle formed by $\mathbf{b},\bar{\mathbf{b}},\mathbf{q}$.
Let $\bar{\mathbf{b}}'$ be the point on plane $\mathbf{b}-\bar{\mathbf{b}}-\mathbf{q}$ such that the triangles $\mathbf{b}-\bar{\mathbf{b}}-\bar{\mathbf{y}}'$ and $\mathbf{b}-\bar{\mathbf{b}}-\mathbf{b}'$ are identical. ($\bar{\mathbf{b}}'=\mathbf{b}'$ if the triangles $\mathbf{b}-\bar{\mathbf{b}}-\mathbf{q}$ and $\mathbf{b}-\bar{\mathbf{b}}-\mathbf{b}'$ are on the same plane.)
Clearly, angle $\bar{\mathbf{b}}-\mathbf{b}-\bar{\mathbf{b}}'$ equals angle $\bar{\mathbf{b}}-\mathbf{b}-\mathbf{b}'$, which is $\alpha_1$.
Thus, we can consider triangle $\mathbf{b}-\bar{\mathbf{b}}-\bar{\mathbf{b}}'$ instead of $\mathbf{b}-\bar{\mathbf{b}}-\mathbf{b}'$.

Assume for the sake of contradiction that $\alpha_4 < \alpha_1$.
Note that since $\eta\leq\frac{\Delta}{2\gamma^2}\leq\frac{\Delta}{\gamma^2}$,
\begin{eqnarray*}
\|\mathbf{b}-\bar{\mathbf{b}}'\}_2&\leq&\|\mathbf{b}-\bar{\mathbf{b}}\|_2=\eta\|\nabla\phi(\mathbf{b})\|_2\leq\frac{\Delta}{\gamma}\\
&\leq&\frac{|\langle\nabla\phi(\mathbf{b}),\mathbf{b}-\mathbf{q}\rangle|}{\|\nabla\phi(\mathbf{b})\|_2}\leq\|\mathbf{b}-\mathbf{q}\|_2.
\end{eqnarray*}
This implies that line $\mathbf{b}-\mathbf{q}$ must intersect the line $\bar{\mathbf{b}}'-\bar{\mathbf{b}}$ at some point $\mathbf{z}$ with $\|\mathbf{z}-\bar{\mathbf{b}}\|_2<\|\bar{\mathbf{b}}'-\bar{\mathbf{b}}\|_2$.
However, since $\mathbf{b},\mathbf{q}\in\mathcal{K}$ and $\mathcal{K}$ is convex, we know that $\mathbf{z}\in\mathcal{K}$,
we know that $\mathbf{z}\in\mathcal{K}$ and thus $\|\mathbf{z}-\bar{\mathbf{b}}\|_2\geq\|\bar{\mathbf{b}}'-\bar{\mathbf{b}}\|_2$ by the definition of $\mathbf{b}'$.
This is a contradiction as $\|\mathbf{b}'-\bar{\mathbf{b}}\|_2=\|\bar{\mathbf{b}}'-\bar{\mathbf{b}}\|_2$.
We can conclude that $\alpha_4\geq\alpha_1$.

\subsubsection{Proof of Lemma~\ref{lm:2}}
Since $\alpha_4\geq\alpha_1$ and $\|\mathbf{y}_t-\mathbf{q}\|_2\geq\|\mathbf{y}_t-\bar{\mathbf{y}}'\|_2$,
$\bar{\mathbf{b}}'$ must lie inside triangle $\mathbf{b}-\bar{\mathbf{b}}-\mathbf{q}$,
which implies that the line passing through $\mathbf{q}$ and $\bar{\mathbf{b}}'$ must intersect line segment $\mathbf{b}-\bar{\mathbf{b}}$ at some point $\mathbf{w}$.
Let $a_1,a_2,a_3$ denote angles $\mathbf{b}-\mathbf{q}-\bar{\mathbf{b}}'$, $\mathbf{q}-\bar{\mathbf{b}}'-\bar{\mathbf{b}}$, and $\mathbf{w}-\bar{\mathbf{b}}-\bar{\mathbf{b}}'$, respectively.

We first claim that $a_2\geq\frac{\pi}{2}$. To see that, we compare it with angle $\bar{\mathbf{b}}-\mathbf{b}'-\mathbf{q}$, denoted as $a'_2$,
which is at least $\frac{\pi}{2}$.
Since triangles $\mathbf{b}-\bar{\mathbf{b}}-\bar{\mathbf{b}}'$ and $\mathbf{b}-\bar{\mathbf{b}}-\mathbf{b}'$ are identical,
angle $\mathbf{b}-\bar{\mathbf{b}}-\bar{\mathbf{b}}'$ (denoted as $\bar{\theta}$) is the same as angle $\mathbf{b}-\bar{\mathbf{b}}-\mathbf{b}'$ (denoted as $\theta$), which implies that angle $\bar{\mathbf{b}}'-\bar{\mathbf{b}}-\mathbf{q}$ is at most angle $\mathbf{b}'-\mathbf{b}-\mathbf{q}$ since $\bar{\mathbf{b}}'$ lies on the same plane as $\mathbf{b}-\bar{\mathbf{b}}-\mathbf{q}$.
Since $\|\mathbf{b}'-\bar{\mathbf{b}}\|_2=\|\bar{\mathbf{b}}'-\bar{\mathbf{b}}\|_2$,
if we rotate triangles $\mathbf{b}'-\bar{\mathbf{b}}-\mathbf{q}$ and $\bar{\mathbf{b}}'-\bar{\mathbf{b}}-\mathbf{q}$ along line $\mathbf{q}-\bar{\mathbf{b}}$ to make them lie on the same plane, $\bar{\mathbf{b}}'$ must lie inside of triangle $\mathbf{b}'-\bar{\mathbf{b}}-\mathbf{q}$.
As $a'_2\geq\frac{\pi}{2}$, we must have that $a_2\geq a'_2\geq\frac{\pi}{2}$.

Next, note that $\alpha_3=a_2-a_3\geq\frac{\pi}{2}-a_3$ and $a_3=a_1+\alpha_4$.
So, $\sin\alpha_3\geq\sin(\frac{\pi}{2}-a_1-\alpha_4)=\cos(a_1+\alpha_4)=\cos a_1\cos\alpha_4-\sin a_1\sin\alpha_4\geq\cos\alpha_4\sqrt{1-\sin^2 a_1}-\sin a_1$ as $\sin\alpha_4\leq 1$ and $\sin a_1\geq 0$. Moreover,
\begin{eqnarray*}
\sin a_1&\leq&\frac{\|\mathbf{b}-\mathbf{w}\|_2}{\|\mathbf{b}-\mathbf{q}\|_2}\leq\frac{\|\mathbf{b}-\bar{\mathbf{b}}\|_2}{\mathbf{b}-\mathbf{q}}\\
&\leq&\frac{\eta\|\nabla\phi(\mathbf{b})\|_2}{(\frac{\Delta}{\|\nabla\phi(\mathbf{b})\|_2\cos\alpha_4})}
\leq\frac{\eta\gamma^2\cos\alpha_4}{\Delta}\leq\frac{\cos\alpha_4}{2}.
\end{eqnarray*}
As a result, we have that $\sin\alpha_4\geq\cos\alpha_4\sqrt{1-\frac{1}{4}}\geq\frac{\cos\alpha_4}{3}$.

\section{Future Work}
Our current analysis can only provide a lower convergence rate than the BCGP and two-block alternating minimization algorithms in \cite{beck:tetruashvili}. Deriving a better convergence rate is an immediate future work. 
For distributed computation, although the projection is applied to the whole vector $\mathbf{b}$ as in \cite{birnbaum}, alternatively, one may apply mirror descents with respect to $b_i$ separately, and 
show that, jointly, the objective still converges to minimum (as has been done for congestion games
\cite{CL14,CL15,CL16}).

Agents may strategically report their interests in different services in order to maximize their own utilities in markets.
When we allow this and treat the market equilibrium computation as a mapping from a given report, we are studying market games.
As far as we know, it is still an open question  Nash equilibria exist in ``Arrow-Debreu market games" (as in Fisher market games \cite{branzei}).
It would be also interesting to consider different classes of underlying social networks that describe crowd structures \cite{kakade}.
There are some price-of-anarchy results in Fisher market games \cite{branzei}.
Another interesting research direction might be better bounding the price of anarchy for Fisher/Arrow-Debreu market games.

\subsubsection*{Acknowledgements.}
We would like to thank Ling-Wei Wang for useful discussions.
%%%%%%%%%%%%%%%%%%%%%%%%%%%%%%%%%%%%%%%%%%%%%%%%%%%%%%%%%%%%%%%%%%%%%%%%%%%%%%%%%%%%%%%%%%%%%%%%%%%%%%%%%
%% bibliography: see CFP for number of permitted pages

\bibliographystyle{plain}  % do not change this line!
%\bibliography{markets}

\appendix
\section{Argument of Condition (*) for Theorem~\ref{thm:existence}}
We restate the argument of \cite{devanur:garg} here.
Assume that $\{k\}$ is a singleton strongly connected component without a loop.
Let $T$ denote the set of nodes different from $\{k\}$ that can be reached on a directed path in $E$ from $\{k\}$.
In an equilibrium allocation, the agents in $T \cup \{k\}$ spend all their money on the
goods of the agents in $T$, which implies that $p_k = 0$, contrary to our assumption that $p_j>0$ for every $j\in A$.

\section{Proof of Proposition~\ref{pro:hessian}}
Let $\mathbf{b} \in S_\mathbf{\beta}$. Consider any $i,k \in A$, $j,l \in G$. First, we have
\begin{equation}
\nonumber \frac{\partial \phi_\mathbf{\beta}(\mathbf{b})}{\partial b_{ij}} = -\log u_{ij} + \log \frac{p_j}{\beta_i} +1= 1-(\log u_{ij} - \log \frac{p_j}{\beta_i}) = 1-\log \frac{u_{ij}\beta_i}{p_j}.
\end{equation}
Let $p_j=\sum_i b_{ij}$. Thus, we have
\begin{equation}
\nonumber \frac{\partial \phi_\mathbf{\beta}(\mathbf{b})}{\partial b_{ij}} = 1-\log \frac{u_{ij}\beta_i}{\sum_i b_{ij}}.
\end{equation}
Next, note that except for $i = k$, we have
\begin{equation}
\nonumber  \frac{\partial^2 \phi_\mathbf{\beta}(\mathbf{b})}{\partial b_{ij} \partial b_{kl}}=0 ,
\end{equation}
and if $i = k$, we have
\begin{equation}
\nonumber \frac{\partial^2 \phi_\mathbf{\beta}(\mathbf{b})}{\partial b_{ij} \partial a_{kl}} = \frac{1}{\sum_i b_{ij}}=\frac{1}{p_j} .
\end{equation}
This means that each entry of the Hessian matrix $\nabla^2 \phi_\mathbf{\beta}(\mathbf{b})$ is at most $n$. Then for any $z \in \mathbb{R}$, we have
\begin{equation}
\begin{aligned}
\nonumber z^\top\cdot (\nabla^2 \phi_\mathbf{\beta}(b))\cdot z &= (\sum_i z_{ij} \frac{1}{p_j})_{ij} \cdot z \\
						                                                      &= \sum_{i,j} z_{ij} (\sum_i z_{ij} \frac{1}{p_j}) \\
						                                                      &= \sum_j \frac{1}{p_j} (\sum_i z_{ij})^2 \\
						                                                      & \leq \sum_j (z_{1j}+z_{2j}+...+z_{nj})^2 \\
						                                                      &\leq \sum_j (|z_{1j}|+|z_{2j}|+...+|z_{nj}|)^2 \\
						                                                      &\leq \sum_j (1^2+1^2+...+1^2)(z_{1j}^2+z_{2j}^2+...+z_{nj}^2) \\
						                                                      &= n ( \sum_{i,j} z^2_{ij}).
\end{aligned}
\end{equation}
The first inequality is by $p_i \geq 1$. The third inequality is by Cauchy-Schwarz inequality. This implies that $\nabla^2 \phi_\mathbf{\beta}(\mathbf{b}) \preceq \alpha I$ with $\alpha=n$.

\end{document}